\newtheorem{problem}{Problem}                               
\newtheorem{definition}{Definition}
\newtheorem{proposition}{Proposition}
\newtheorem{proof}{Proof}
\newtheorem{remark}{Remark}
\begin{document}

\begin{frontmatter}
\runtitle{Control of Nonlinear Switched Systems}  

\title{Control of Nonlinear Switched Systems Based on Validated Simulation} 

\thanks[footnoteinfo]{This paper was not presented at any IFAC 
meeting. A short version of this paper appeared in SNR'16 \cite{NL_minimator}.
Corresponding author A. Le Co\"ent Tel. +33147407429.}

\author[cmla]{Adrien Le Co\"ent$^\star$}\ead{adrien.le-coent@ens-cachan.fr},    
\author[ensta]{Julien Alexandre dit Sandretto}\ead{alexandre@ensta.fr},
\author[ensta]{Alexandre Chapoutot}\ead{chapoutot@ensta.fr},               
\author[lsv]{Laurent Fribourg}\ead{fribourg@lsv.ens-cachan.fr}  

\address[cmla]{CMLA, ENS Cachan, CNRS, Universit\'e Paris-Saclay, 61 av. du Pr\'esident Wilson, 94235 Cachan Cedex, France}  
\address[ensta]{U2IS, ENSTA ParisTech, Universit\'e Paris-Saclay, 828, Boulevard des Mar\'echaux, 91762 Palaiseau Cedex, France}             
\address[lsv]{LSV, ENS Cachan, CNRS, Universit\'e Paris-Saclay, 61 av. du Pr\'esident Wilson, 94235 Cachan Cedex, France}        

\begin{keyword}                           
Nonlinear control systems, reachability, formal methods, numerical simulation, control system synthesis             
\end{keyword}                             

\begin{abstract}                          
  We present an algorithm of control synthesis for nonlinear switched
  systems, based on an existing procedure of state-space bisection and
  made available for nonlinear systems with the help of validated
  simulation. The use of validated simulation also permits to take
  bounded perturbations and varying parameters into account. 
  It is particularly interesting for safety critical applications,
  such as in aeronautical, military or medical fields. 
  The whole approach is entirely guaranteed and the induced controllers are
  \textit{correct-by-design}. 
\end{abstract}

\end{frontmatter}

\section{Introduction}


We focus here on switched control systems, a class of hybrid systems
recently used with success in various domains such as automotive
industry and power electronics. These systems are merely described by
piecewise dynamics, periodically sampled with a given period. At each
period, the system is in one and only one mode, decided by a control
rule \cite{fribourg2014finite,liberzon2012switching}. Moreover, the considered 
systems can switch between any two modes instantaneously. This simplification can be easily 
by-passed by the addition of intermediate facticious modes.

In this paper, we consider that these modes are represented by
nonlinear ODEs.  In order to compute the control of a switched system,
we do need the solution of differential equations. In the general
case, differential equations can not be integrated formally, and a
numerical integration scheme is used to approximate the state of the
system. With the objective of computing a guaranteed control, we base
our approach on validated simulation (also called ``reachability
analysis'').  The \emph{guaranteed} or \emph{validated} solution of
ODEs using interval arithmetic is mainly based on two kinds of methods
based on: i) Taylor series
\cite{Moore66,Nedialkov,LiSt07,Dzetkulic:2015fk} ii) Runge-Kutta
schemes \cite{BM06,Gajda:2008fk,BCD13,report}. The former is the
oldest method used in interval analysis community because the
expression of the bound of a Taylor series is simple to obtain.
Nevertheless, the family of Runge-Kutta methods is very important in
the field of numerical analysis.  Indeed, Runge-Kutta methods have
several interesting stability properties which make them suitable for
an important class of problems. Our tool \cite{dynibex} implements
Runge-Kutta based methods which prove their efficiency at low order
for short simulation (fixed by sampling period of controller).

In the methods of symbolic analysis and control of hybrid systems, the
way of representing sets of state values and computing reachable sets
for systems defined by autonomous ordinary differential equations
(ODEs), is fundamental (see, e.g.,
\cite{girard2005reachability,althoff2013reachability}).  Many tools
using, eg. linearization or hybridization of these dynamics are now
available (e.g., Spacex \cite{frehse2011spaceex}, Flow*
\cite{chen2013flow}, iSAT-ODE~\cite{eggers2008sat}).
An interesting approach appeared recently, based on the propagation of
reachable sets using guaranteed Runge-Kutta methods with adaptive step
size control (see \cite{BCD13,immler2015verified}).  An originality of
the present work is to use such guaranteed integration methods in the
framework of switched systems. This notion of guarantee of the results is very 
interesting, because we are mainly interested into critical domain, such as aeronautical, military and medical ones. 
Other symbolic approaches for control synthesis of
switched systems include the construction of a discrete abstraction 
of the original system
on a grid of the state space. This can be done
by computing symbolic models that are approximately bisimilar \cite{girard2010approximately} or 
approximately alternatingly similar \cite{zamani2012symbolic} 
to the original system.
Another recent symbolic approach relies on
feedback refinement relations \cite{reissig2015feedback}. We compare our work with the 
last two approaches, which are the closest related methods since the associated 
tools (respectively PESSOA \cite{Mazo2010} and SCOTS \cite{SCOTS}) are used to perform 
control synthesis on switched systems without any stability 
assumptions, such as the present method.


The paper is divided as follows. In Section~\ref{sec:switched}, we
introduce some preliminaries on switched systems and some notation
used in the following. In Section~\ref{sec:simulation}, the guaranteed
integration of nonlinear ODEs is presented. In
Section~\ref{sec:minimator}, we present the main algorithm of
state-space bisection used for control synthesis.  In
Section~\ref{sec:experimentations}, the whole approach is tested on
three examples of the literature.
We give some performance tests and compare 
our approach with the state-of-the-art tools in section \ref{sec:comparison}.
We conclude in section \ref{sec:conclu}.

\section{Switched systems}
\label{sec:switched}

Let us consider the nonlinear switched system
\begin{equation}
 \dot x(t) = f_{\sigma (t)}(x(t),d(t))
 \label{eq:sys}
\end{equation}
defined for all $t \geq 0$, where $x(t) \in \mathbb{R}^n$ is the state
of the system, $\sigma(\cdot) : \mathbb{R}^+ \longrightarrow U$ is the
switching rule, and $d(t) \in \mathbb{R}^m$ is a bounded
perturbation. The finite set $U = \{ 1, \dots , N \}$ is the set of
switching modes of the system.  We focus on sampled switched systems:
given a sampling period $\tau >0$, switchings will occur at times
$\tau$, $2\tau$, \dots{} The switching rule $\sigma(\cdot)$ is thus
piecewise constant, we will consider that $\sigma(\cdot)$ is constant
on the time interval $\lbrack (k-1) \tau , k \tau )$ for $k \geq 1$.
We call ``\emph{pattern}'' a finite sequence of modes $\pi =
(i_1,i_2,\dots,i_k) \in U^k$.  With such a control input, and under a
given perturbation $d$, we will denote by $\mathbf{x}(t; t_0,
x_0,d,\pi)$ the solution at time $t$ of the system
\begin{equation}
\begin{aligned}
  \dot x(t) & =  f_{\sigma (t)}(x(t),d(t)), \\
  x(t_0) & =  x_0, \\
  \forall j \in \{1,\dots,k\}, \ \sigma(t) & =  i_j \in U \ \text{for} \ t
  \in \lbrack (j-1) \tau , j \tau ).
\end{aligned}
 \label{eq:sampled-sys}
\end{equation}

We address the problem of synthesizing a
  state-dependent switching rule $\tilde \sigma(x)$
  for~\eqref{eq:sampled-sys} in order to verify some properties. The
  problem is formalized as follows:
\begin{problem}[Control Synthesis Problem]
 Let us consider a sampled switched system~\eqref{eq:sampled-sys}.
 Given three sets $R$, $S$, and $B$, with $R \cup B
   \subset S$ and $R \cap B = \emptyset$, find a rule $\tilde
 \sigma(x)$ such that, for any $x(0)\in R$
 \begin{itemize}
 \item \textit{$\tau$-stability}\footnote{This definition of stability
   is different from the stability in the Lyapunov sense.}: $x(t)$
   returns in $R$ infinitely often, at some multiples of sampling time
   $\tau$.
  \item \textit{safety}: $x(t)$ always stays in $S \backslash B$.
 \end{itemize}
 \label{prob:nl_control}
\end{problem}

 Under the above-mentioned notation, we propose a
 procedure which solves this problem by constructing a law $\tilde
 \sigma(x)$, such that for all $x_0 \in R$, and under the unknown
 bounded perturbation $d$, there exists $\pi = \tilde \sigma(x_0) \in
 U^k$ for some $k$ such that:
 \begin{equation*}
   \left\{
     \begin{aligned}
       \mathbf{x}(t_0 + k\tau; t_0, x_0,d,\pi) \in R
       \\
       \forall t \in [t_0,t_0 + k\tau], \quad
       \mathbf{x}(t; t_0, x_0,d,\pi) \in S
       \\
       \forall t \in [t_0,t_0 + k\tau], \quad
       \mathbf{x}(t; t_0, x_0,d,\pi) \notin B
\end{aligned}
\right.
 \end{equation*}


 Such a law permits to perform an infinite-time state-dependent
 control. The synthesis algorithm is described in
 Section~\ref{sec:minimator} and involves guaranteed set based
 integration presented in the next section, the main underlying tool
 is interval analysis \cite{Moore66}.  To tackle this problem, we
 introduce some definitions. In the following, we will often use the
 notation $\lbrack x \rbrack \in \mathbb{IR}$ (the set of intervals
 with real bounds) where $\lbrack x \rbrack = \lbrack\underline{x},
 \overline{x}\rbrack=\{ x \in \Rset \mid \underline{x} \leqslant x
 \leqslant \overline{x} \}$ denotes an interval. By an abuse of
 notation $[x]$ will also denote a vector of intervals, \emph{i.e.}, a
 Cartesian product of intervals, a.k.a. a \emph{box}. In the
 following, the sets $R$, $S$ and $B$ are given under the form of
 boxes.

 \begin{definition}[Initial Value Problem (IVP)]
   Consider an ODE with a given
   initial condition
   \begin{equation}
     \label{eq:ivp}
     \dot{x}(t) = f(t, x(t), d(t))\quad\text{with}
     \quad x(0) \in X_0, \ d(t) \in \lbrack d \rbrack,
   \end{equation}
   with $f:\Rset^+\times\Rset^n\times \Rset^m\rightarrow\Rset^n$
   assumed to be continuous in $t$ and $d$ and globally Lipschitz in
   $x$. We assume that parameters $d$ are bounded (used to represent a
   perturbation, a modeling error, an uncertainty on
   measurement,~\dots).  An \emph{IVP} consists in finding a function
   $x(t)$ described by the ODE~\eqref{eq:ivp} for all $d(t)$ lying in
   $\lbrack d \rbrack$ and for all the initial conditions in $X_0$.
\end{definition}

 \begin{definition}
   Let $X \subset \mathbb{R}^n$ be a box of the state space. Let $\pi
   = (i_1,i_2,\dots,i_k) \in U^k$. The \emph{successor set} of $X$
   via $\pi$, denoted by $Post_{\pi}(X)$, is the (over-approximation
   of the) image of  $X$ induced by
   application of the pattern $\pi$, \emph{i.e.}, the solution at time
   $t = k \tau$ of
  \begin{equation}
    \label{eq:ivp_post}
    \begin{aligned}
      \dot x(t) &=  f_{\sigma (t)}(x(t),d(t)), \\
      x(0) & =  x_0 \in X,   \\
      \forall t \geq 0, & \quad d(t) \in \lbrack d \rbrack, \\
       \forall j \in \{1,\dots,k\},& \quad \sigma(t) = i_j \in U \ \text{for}
       \ t \in \lbrack (j-1) \tau , j \tau ).
    \end{aligned}
 \end{equation}
  \label{def:post}
 \end{definition}

 \begin{definition}
   Let $X \subset \mathbb{R}^n$ be a box of the state space.  Let $\pi
   = (i_1,i_2,\dots,i_k) \in U^k$.  We denote by $Tube_{\pi}(X)$ the
   union of boxes covering the trajectories of
   IVP~\eqref{eq:ivp_post}, which construction is detailed in
   Section~\ref{sec:simulation}.
  \label{def:tube}
 \end{definition}

\section{Validated simulation}
\label{sec:simulation}



In this section, we describe our approach for validated simulation
based on Runge-Kutta methods \cite{BCD13,report}.

A numerical integration method computes a sequence of approximations
$(t_n, x_n)$ of the solution $x(t;x_0)$ of the IVP defined in
Equation~\eqref{eq:ivp} such that $x_n \approx x(t_n;x_{n-1})$. The
simplest method is Euler's method in which $t_{n+1}=t_n+h$ for some
step-size $h$ and $x_{n+1}=x_n+h\times f(t_n,x_n, d)$; so the
derivative of $x$ at time $t_n$, $f(t_n,x_n, d)$, is used as an
approximation of the derivative on the whole time interval to perform
a linear interpolation. This method is very simple and fast, but
requires small step-sizes. More advanced methods coming from the
Runge-Kutta family use a few intermediate computations to improve the
approximation of the derivative. The general form of an explicit
$s$-stage Runge-Kutta formula, that is using $s$ evaluations of $f$,
is
\begin{equation}
  \begin{aligned}
    x_{n+1}  = x_n + h \sum_{i=1}^s b_i k_i\enspace, \\
    k_1  = f\big(t_n,\, x_n, d\big)\enspace,\\
    k_i  = f \Big(t_n + c_i h,\, x_n + h \sum_{j=1}^{i-1} a_{ij}k_j, d\Big)
    , \ i = 2,3,\dots,s\enspace.
    \label{eq:ki}
  \end{aligned}
\end{equation}
The coefficients $c_i$, $a_{ij}$ and $b_i$ fully characterize the
method. To make Runge-Kutta validated, the challenging question is how
to compute a bound on the distance between the true solution and the
numerical solution, defined by $x(t_n;x_{n-1}) - x_n$. This distance
is associated to the \emph{local truncation error} (LTE) of the
numerical method.

To bound the LTE, we rely on \textit{order condition}~\cite{HNW93}
respected by all Runge-Kutta methods. This condition states that a
method of this family is of order $p$ iff the $p+1$ first coefficients
of the Taylor expansion of the solution and the Taylor expansion of
the numerical methods are equal. In consequence, LTE is proportional
to the Lagrange remainders of Taylor expansions.  Formally, LTE is
defined by (see \cite{BCD13}):
\begin{multline}
  \label{eq:truncation-error}
  x(t_n;x_{n-1}) - x_n = \\
  \hspace{5mm}\frac{h^{p+1}}{(p+1)!} \left( f^{(p)}\left(\xi,x(\xi; x_{n-1}), d
  \right) - \frac{d^{p+1}\phi}{dt^{p+1}}(\eta) \right)
  \\
  \xi\in]t_n, t_{n+1}[ \text{ and } \eta\in]t_n, t_{n+1}[\enspace.
\end{multline}
The function $f^{(n)}$ stands for the $n$-th derivative of function
$f$ w.r.t. time $t$ that is $\frac{d^n f}{dt^n}$ and $h=t_{n+1}-t_n$
is the step-size.  The function $\phi:\Rset\to\Rset^n$ is defined by
$\phi(t)= x_n + h \sum_{i=1}^s b_i k_i(t)$ where $k_i(t)$ are defined
as Equation~\eqref{eq:ki}.

The challenge to make Runge-Kutta integration schemes safe w.r.t. the
true solution of IVP is then to compute a bound of the result of
Equation~\eqref{eq:truncation-error}. In other words we have to bound
the value of $f^{(p)}\left(\xi, x(\xi;x_{n-1}), d\right)$ and the
value of $\frac{d^{p+1}\phi}{dt^{p+1}}(\eta)$. The latter expression
is straightforward to bound because the function $\phi$ only depends
on the value of the step-size $h$, and so does its $(p+1)$-th
derivative. The bound is then obtain using the affine arithmetic \cite{AffineA97,alexandre2016validated}. 

However, the expression $f^{(p)}\left(\xi, x(\xi;x_{n-1}), d\right)$
is not so easy to bound as it requires to evaluate $f$ for a
particular value of the IVP solution $x(\xi;x_{n-1})$ at an unknown
time $\xi \in ]t_n, t_{n+1}[$. The solution used is the same as the
one found in~\cite{Nedialkov,BM06} and it requires to bound the
solution of IVP on the interval $[t_n, t_{n+1}]$. This bound is
usually computed using the Banach's fixpoint theorem applied with the
Picard-Lindel\"of operator, see \cite{Nedialkov}. This operator is
used to compute an enclosure of the solution $[\tilde{x}]$ of IVP over
a time interval $[t_n, t_{n+1}]$, that is for all $t \in [t_n,
t_{n+1}]$, $x(t; x_{n-1}) \in [\tilde{x}]$. We can hence bound
$f^{(p)}$ substituting $x(\xi;x_{n-1})$ by $[\tilde{x}]$.

For a given pattern of switched modes $\pi = (i_1,\dots,i_k)
\in U^k$ of length $k$, we are able to compute, for $j \in
\{1,..,k\}$, the enclosures:
\begin{itemize}
 \item $[x_j] \ni x(t_j)$;
 \item $[\tilde{x}_j] \ni x(t), \ \text{for} \ t \in \lbrack
   (j-1)\tau,j\tau\rbrack$.
\end{itemize}
with respect to the system of IVPs:

\begin{equation}
  \left\{
  \begin{array}{c}
    \dot x(t) = f_{\sigma (t)}(t,x(t),d(t)),\\
    \nonumber x(t_0=0) \in [x_0] , d(t) \in [d],\\
    \sigma(t) = i_1, \forall t \in [0,t_1], t_1=\tau\\
    \vdots\\
    \dot x(t) = f_{\sigma (t)}(t,x(t),d(t)),\\
    \nonumber x(t_{k-1}) \in [x_{k-1}], d(t) \in [d],\\
    \sigma(t) = i_k, \forall t \in [t_{k-1},t_k], t_k=k\tau
  \end{array}
  \right.
\end{equation}

Thereby, the enclosure $Post_{\pi}(\lbrack x_0 \rbrack)$ is included
in $[x_k]$ and $Tube_{\pi}(\lbrack x_0 \rbrack)$ is included in
$\bigcup_{j=1,..,k} [\tilde{x}_j]$. This applies for all initial
states in $\lbrack x_0 \rbrack$ and all disturbances $d(t) \in [d]$. A
view of enclosures computed by the validated simulation for one
solution obtained for Example~\ref{ex2} is shown in
Figure~\ref{fig:post_tube}.

\begin{figure}[ht]
 \centering
 \includegraphics[trim = 4cm 3cm 4cm 4cm, clip, width=0.45\textwidth]{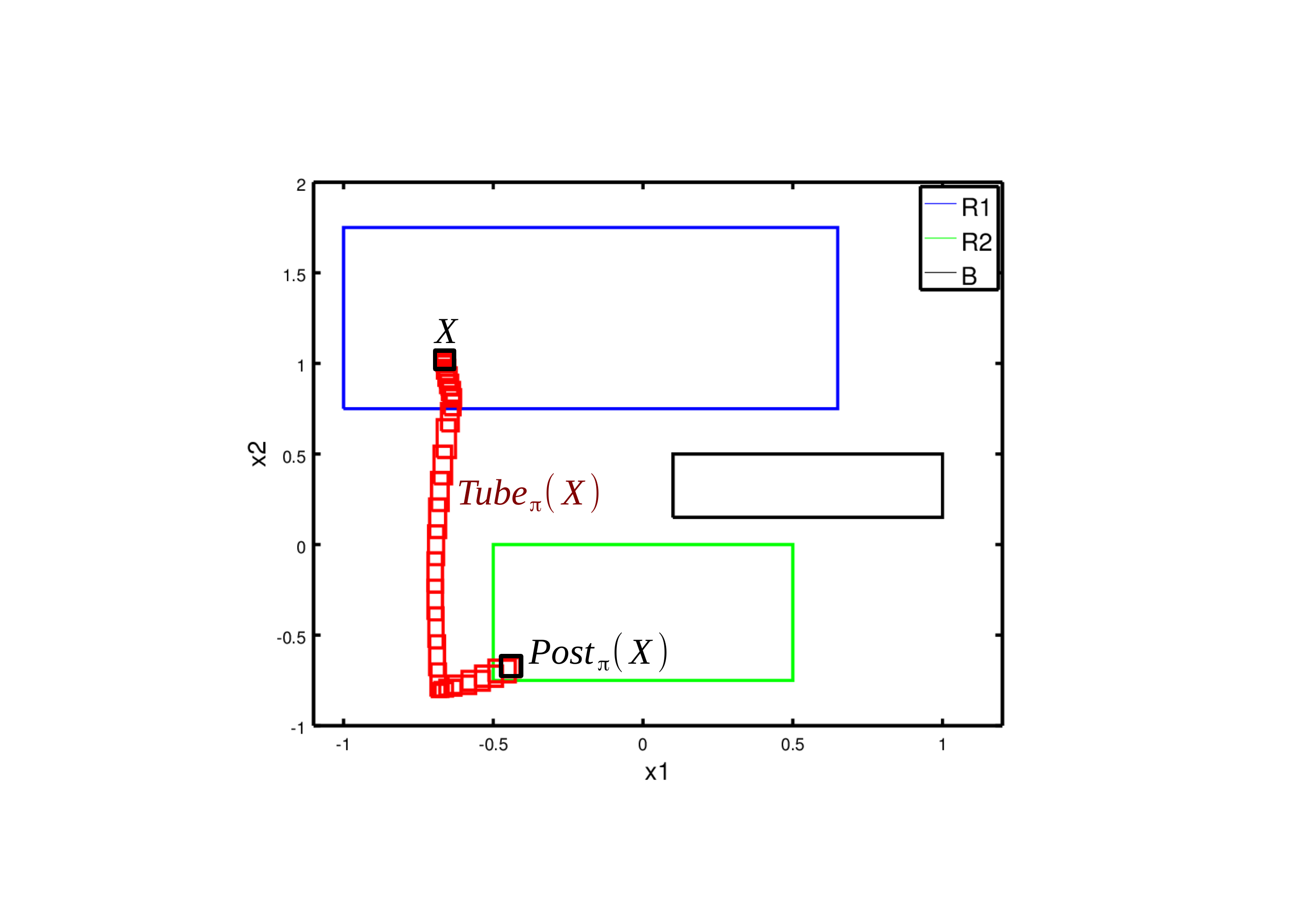}
 \caption{Functions $Post_{\pi}(X)$ and $Tube_{\pi}(X)$ for the initial box $X=[-0.69,-0.64] \times [1,1.06]$, 
with a pattern $\pi = (1,3,0)$.}
 \label{fig:post_tube}
\end{figure}

\section{The state-space bisection algorithm}
\label{sec:minimator}

\subsection{Principle of the algorithm}

We describe here the algorithm solving the control synthesis problem (see Problem \ref{prob:nl_control},
Section \ref{sec:switched}). Given the
input boxes $R$, $S$, $B$, and given two positive integers $K$ and
$D$, the algorithm provides, when it succeeds, a decomposition
$\Delta$ of $R$ of the form $\{ V_i, \pi_i \}_{i \in I}$, with the
properties:

$\bigcup_{i \in I} V_i = R$,

$\forall i \in I, \ Post_{\pi_i}(V_i) \subseteq R$,

$\forall i \in I, \ Tube_{\pi_i}(V_i) \subseteq S$,

$\forall i \in I, \ Tube_{\pi_i}(V_i) \bigcap B = \emptyset$.

The sub-boxes $\{ V_i \}_{i \in I}$ are obtained by repeated
bisection.  At first, function $Decomposition$ calls sub-function
$Find\_Pattern$ which looks for a pattern $\pi$ of length at most $K$
such that $Post_{\pi}(R) \subseteq R$, $Tube_{\pi}(R) \subseteq S$ and
$Tube_{\pi}(R) \bigcap B = \emptyset$.  If such a pattern $\pi$ is
found, then a uniform control over $R$ is found (see
Figure~\ref{fig:scheme}(a)). Otherwise, $R$ is divided into two
sub-boxes $V_1$, $V_{2}$, by bisecting $R$ w.r.t. its longest
dimension. Patterns are then searched to control these sub-boxes (see
Figure~\ref{fig:scheme}(b)). If for each $V_i$, function
$Find\_Pattern$ manages to get a pattern $\pi_i$ of length at most $K$
verifying $Post_{\pi_i}(V_i) \subseteq R$, $Tube_{\pi_i}(V_i)
\subseteq S$ and $Tube_{\pi_i}(V_i) \bigcap B = \emptyset$, then it is
done.  If, for some $V_j$, no such pattern is found, the procedure is
recursively applied to $V_j$.  It ends with success when every sub-box
of $R$ has a pattern verifying the latter conditions, or fails when
the maximal degree of decomposition $D$ is reached.  The algorithmic
form of functions $Decomposition$ and $Find\_Pattern$ is given in Figures \ref{fig:decomposition} (cf. \ref{fig:findpattern} and in
\cite{fribourg2014finite,ulrich} for the linear case).

\begin{figure}[ht]
 \centering
 \includegraphics[trim = 2cm 6cm 4cm 5.5cm, clip, width=0.45\textwidth]{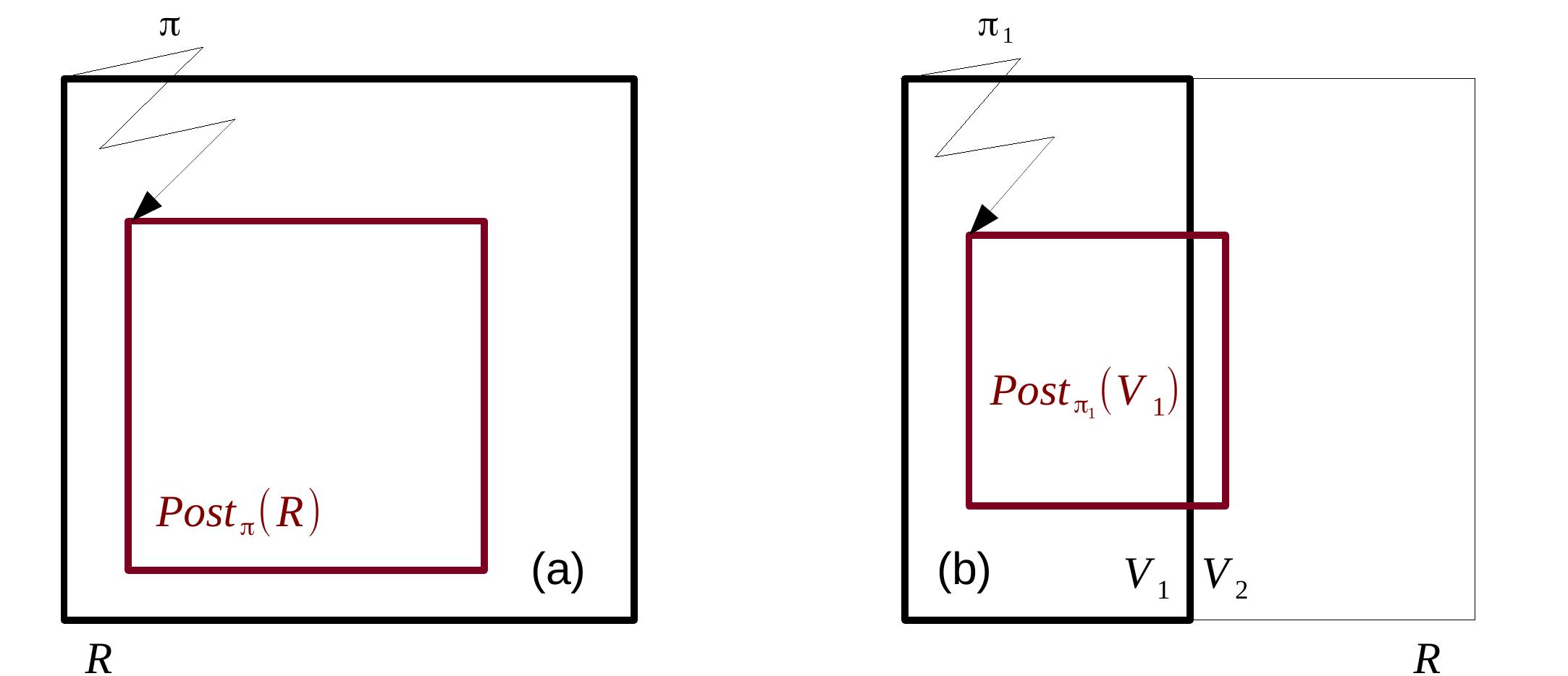}
 \caption{Principle of the bisection method.}
 \label{fig:scheme}
\end{figure}

Having defined the control synthesis method, we now introduce the main
result of this paper, stated as follows:
 \begin{proposition}
  The algorithm of Figure \ref{fig:decomposition} with input $(R,R,S,B,D,K)$
  outputs, when it successfully terminates,
  a decomposition $\{ V_i,\pi_i \}_{i \in I}$ of~$R$ which solves Problem~\ref{prob:nl_control}. 
 \end{proposition}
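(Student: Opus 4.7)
The plan is to prove the proposition by constructing the state-dependent switching law $\tilde\sigma$ explicitly from the decomposition and then verifying, by induction on the number of completed patterns, that both the $\tau$-stability and the safety requirements of Problem~\ref{prob:nl_control} are satisfied for every initial state in $R$ and every admissible disturbance $d(\cdot) \in [d]$.

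First I would define $\tilde\sigma$ from the output. Since the successful output satisfies $\bigcup_{i\in I} V_i = R$, for every $x \in R$ there exists at least one index $i(x)\in I$ with $x \in V_{i(x)}$ (ties may be broken by any fixed rule). Set $\tilde\sigma(x)=\pi_{i(x)}$, and apply this pattern over the next $|\pi_{i(x)}|\,\tau$ units of time; at the end of the pattern, re-evaluate $\tilde\sigma$ at the new state. This is well defined because the algorithm guarantees $|\pi_i|\leq K$ for every $i\in I$.

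Next I would prove the one-step safety-and-return property. Fix $x_0 \in R$, let $i=i(x_0)$, $\pi=\pi_i$, $k=|\pi|$, and let $d(\cdot) \in [d]$ be any admissible disturbance. By Definitions~\ref{def:post} and~\ref{def:tube}, together with the validated-simulation construction of Section~\ref{sec:simulation}, the true solution $\mathbf{x}(t;0,x_0,d,\pi)$ of~\eqref{eq:sampled-sys} satisfies
\begin{equation*}
\mathbf{x}(k\tau;0,x_0,d,\pi) \in Post_{\pi}(V_i), \quad \{\mathbf{x}(t;0,x_0,d,\pi) : t\in[0,k\tau]\} \subseteq Tube_{\pi}(V_i).
\end{equation*}
The three properties granted by a successful termination, $Post_{\pi_i}(V_i)\subseteq R$, $Tube_{\pi_i}(V_i)\subseteq S$ and $Tube_{\pi_i}(V_i)\cap B=\emptyset$, then immediately yield $\mathbf{x}(k\tau;\cdot)\in R$ and $\mathbf{x}(t;\cdot)\in S\setminus B$ for all $t\in[0,k\tau]$. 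This matches exactly the three set-membership constraints that the paper asks a solution to satisfy over one pattern.

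Finally I would iterate. Let $x_1 = \mathbf{x}(k_0\tau;0,x_0,d,\pi_{i_0})\in R$ and apply $\tilde\sigma$ again at $x_1$: by the same argument, the trajectory on the next interval of length $k_1\tau \leq K\tau$ stays in $S\setminus B$ and ends in $R$. Concatenating the pieces produces a trajectory defined for all $t\geq 0$ that (i) belongs to $S\setminus B$ at all times, giving safety, and (ii) revisits $R$ at the times $(k_0+k_1+\cdots+k_j)\tau$, which are all multiples of $\tau$ and form an unbounded increasing sequence since each $k_j\geq 1$, giving $\tau$-stability. Since $x_0\in R$ and $d(\cdot)\in[d]$ were arbitrary, $\tilde\sigma$ solves Problem~\ref{prob:nl_control}. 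The only delicate point, and the one I expect to be the main obstacle if stated carefully, is the guarantee that the enclosures $Post_\pi$ and $Tube_\pi$ computed by the validated integrator truly contain the exact trajectories for every disturbance $d(\cdot)\in[d]$ and every initial condition in $V_i$; this is exactly what Section~\ref{sec:simulation} establishes, so it can be invoked without further work.
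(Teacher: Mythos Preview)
Your proposal is correct and follows essentially the same approach as the paper's own proof: pick a covering cell $V_{i_0}\ni x_0$, apply the associated pattern $\pi_{i_0}$, use the properties $Post_{\pi_{i_0}}(V_{i_0})\subseteq R$ and $Tube_{\pi_{i_0}}(V_{i_0})\subseteq S\setminus B$ to land back in $R$ while staying safe, and iterate indefinitely. Your version is in fact slightly more explicit than the paper's in that you spell out the definition of $\tilde\sigma$, note that each $k_j\ge 1$ so the return times diverge, and flag the reliance on the validated-simulation guarantee of Section~\ref{sec:simulation}; the paper's proof leaves these points implicit but is otherwise identical in structure.
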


 \begin{proof}
 Let $x_0 = x(t_0=0)$ be an initial condition belonging to~$R$. If the
 decomposition has terminated successfully, we have $\bigcup_{i \in I}
 V_i = R$, and $x_0$ thus belongs to $V_{i_0}$ for some $i_0\in I$.
 We can thus apply the pattern $\pi_{i_0}$ associated to
 $V_{i_0}$. Let us denote by $k_0$ the length of $\pi_{i_0}$. We have:
 \begin{itemize}
  \item $\mathbf{x}(k_0\tau;0,x_0,d,\pi_{i_0}) \in R$
  \item $\forall t \in [0, k_0\tau], \quad
    \mathbf{x}(t;0,x_0,d,\pi_{i_0}) \in S$
  \item $\forall t \in [0, k_0\tau], \quad
    \mathbf{x}(t;0,x_0,d,\pi_{i_0}) \notin B$
 \end{itemize}
Let $x_1 = \mathbf{x}(k_0\tau;0,x_0,d,\pi_{i_0}) \in R$ be the
state reached after application of $\pi_{i_0}$ and let $t_1 = k_0
\tau$.  State $x_1$ belongs to $R$, it thus belongs to $V_{i_1}$ for
some $i_1 \in I$, and we can apply the associated pattern $\pi_{i_1}$
of length $k_1$, leading to:
 \begin{itemize}
  \item $\mathbf{x}(t_1 + k_1\tau;t_1,x_1,d,\pi_{i_1}) \in R$
  \item $\forall t \in [t_1, t_1 + k_1\tau], \quad
    \mathbf{x}(t;t_1,x_1,d,\pi_{i_1}) \in S$
  \item $\forall t \in [t_1, t_1 + k_1\tau], \quad
    \mathbf{x}(t;t_1,x_1,d,\pi_{i_1}) \notin B$
 \end{itemize}
 We can then iterate this procedure from the new state $x_2 =
 \mathbf{x}(t_1 + k_1\tau;t_1,x_1,d,\pi_{i_1}) \in R$.  This can be
 repeated infinitely, yielding a sequence of points belonging to $R$
 $x_0,x_1,x_2,\dots$ attained at times $t_0,t_1,t_2,\dots$, at which
 the patterns $\pi_{i_0},\pi_{i_1},\pi_{i_2},\dots$ are applied.

 We furthermore have that all the trajectories stay in $S$ and never
 cross $B$: $ \forall t \in \mathbb{R}^+, \exists k \geq 0, \ t \in
 \lbrack t_k , t_{k+1} \rbrack$ and $ \forall t \in \lbrack t_k ,
 t_{k+1} \rbrack,\ \mathbf{x} ( t; t_k, x_k, d,
 \pi_{i_k}) \in S,\ \mathbf{x} (t;t_k, x_k, d,
 \pi_{i_k}) \notin B $.  The trajectories thus return infinitely often
 in $R$, while always staying in $S$ and never crossing $B$.
 \end{proof}

 \begin{remark}
   Note that it is possible to perform reachability from a set $R_1$
   to another set $R_2$ by computing $Decomposition(R_1,R_2,S,B,D,K)$.
   The set $R_1$ is thus decomposed with the objective to send its
   sub-boxes into $R_2$, i.e. for a sub-box $V$ of $R_1$, patterns
   $\pi$ are searched with the objective $Post_{\pi}(V) \subseteq R_2$
   (see Example~\ref{ex2}).
 \end{remark}

  \begin{figure}
  \fbox{
\begin{minipage}{0.42\textwidth}
\begin{algorithmic}
\STATE{\textbf{Function:} $Decomposition(W,R,S,B,D,K)$}
  \STATE{\begin{center}\line(1,0){150}\end{center}}
\STATE{\quad \textbf{Input:} A box $W$, a box $R$, a box $S$, a box $B$, a degree $D$ of bisection,
a length $K$ of input pattern}\STATE{\quad \textbf{Output:}$\langle\{(V_i,\pi_i)\}_{i},True\rangle$ or $\langle\_ ,False\rangle$}
  \STATE{\begin{center}\line(1,0){150}\end{center}}
  \STATE{ $(\pi,b) := Find\_Pattern(W,R,S,B,K)$}
  \IF{$b=True$}{
    \STATE{\textbf{return} $\langle\{(W,Pat)\},True\rangle$}
  }
  \ELSE
    \IF{$D = 0$} \RETURN{$\langle \_,False\rangle$} \ELSE
  \STATE{Divide equally $W$ into $(W_1, W_{2})$ \FOR{$i=1,2$}\STATE{\small{$(\Delta_i,b_i)$ := $Decomposition(W_i,R,S,B,D - 1,K)$}}\ENDFOR
  \RETURN $(\bigcup_{i=1,2} \Delta_i,\bigwedge_{i=1,2} b_i)$ } \ENDIF
 \ENDIF
\end{algorithmic}
\end{minipage}
}
\caption{Algorithmic form of Function $Decomposition$.}
\label{fig:decomposition}
\end{figure}

\begin{figure}
\fbox{
\begin{minipage}{0.42\textwidth}
 \begin{algorithmic}
 \STATE{\textbf{Function:} $Find\_Pattern(W,R,S,B,K)$}
   \STATE{\begin{center}\line(1,0){150}\end{center}}
 \STATE{\quad \textbf{Input:}A box $W$, a box $R$, a box $S$, a box $B$, a length $K$ of input pattern}
 \STATE{\quad \textbf{Output:}$\langle \pi,True\rangle$ or $\langle\_, False\rangle$}
   \STATE{\begin{center}\line(1,0){150}\end{center}}
   \FOR{$i=1\dots K$} \STATE{$\Pi :=$ set of input patterns of length $i$}
   \WHILE{$\Pi$ is non empty} \STATE{Select $\pi$ in $\Pi$}
   \STATE{$\Pi:= \Pi\setminus  \{\pi\}$}
   \IF{$Post_{\pi}(W) \subseteq R$ \AND $Tube_{\pi}(W) \subseteq S$ \AND $Tube_{\pi}(W) \bigcap B = \emptyset$ }{\RETURN{$\langle \pi,True\rangle$}} \ENDIF
   \ENDWHILE
   \ENDFOR
   \RETURN{$\langle \_,False \rangle$}

 \end{algorithmic}
\end{minipage}
}
\caption{Algorithmic form of Function $Find\_Pattern$.}
\label{fig:findpattern}
\end{figure}

 \subsection{The research of patterns}

 We propose here an improvement of the function $Find\_Pattern$
 given in \cite{NL_minimator,fribourg2014finite,ulrich}, which is a naive 
 testing of all the patterns of growing length (up to $K$).
 
 The improved function, denoted here by $Find\_Pattern2$, 
 exploits heuristics to prune the search tree of
 patterns. The algorithmic form of $Find\_Pattern2$ is given in Figure \ref{fig:findpattern2}.
 It relies on a new data structure consisting of a list 
 of triplets containing:
 \begin{itemize}
  \item An initial box $V \subset \mathbb{R}^n$,
  \item A {\em current} box $Post_{\pi}(V)$, image of $V$ by the pattern $\pi$,
  \item The associated pattern $\pi$.
 \end{itemize}
 For any element $e$ of a list of this type, we denote by $e.Y_{init}$ the initial box,
 $e.Y_{current}$ the {\em current} box, and by $e.\Pi$ the associated pattern.
 We denote by $e_{current} = takeHead(\mathcal{L})$ the 
 element on top of a list $\mathcal L$ (this element is removed from list $\mathcal L$).
 The function $putTail(\cdot,\mathcal{L})$ adds an element at the end of the list $\mathcal L$.

 Let us suppose one wants to control a box $X \subseteq R$.
 The list $\mathcal{L}$ of Figure \ref{fig:findpattern2} 
 is used to store the intermediate computations leading to possible solutions (patterns
 sending $X$ in $R$ while never crossing $B$ or  $\mathbb{R}^n \setminus S$). It is initialized
 as $\mathcal{L} = \{ \left(X,X, \emptyset \right) \}$. 
 First, a testing of all the control modes is performed (a set simulation starting from
 $W$ during time $\tau$ is 
  computed for all the modes in $U$). 
The first level of branches is thus tested exhaustively. If a branch leads to crossing $B$
or $\mathbb{R}^n \setminus S$, the branch is cut. Otherwise, either a solution is found 
or an intermediate state is added to $\mathcal{L}$. The next level of
branches (patterns of length $2$) is then explored from branches that are not cut. And so
on iteratively. At the end, either the tree is explored up to level $K$ (avoiding the cut branches),
or all the branches have been cut at lower levels. 
List $\mathcal{L}$ is thus of the form  $\{ (X,Post_{\pi_i}(X),\pi_i) \}_{i \in {I_X}}$,
 where for each $i \in {I_X}$ we have $Post_{\pi_i}(X) \subseteq S$ and 
 $Tube_{\pi_i}(X) \bigcap B = \emptyset$. Here, $I_X$ is the set of indexes associated to 
 the stored intermediate solutions, $\vert I_X \vert$ is thus the number 
 of stored intermediate solutions for the initial box $X$.
 The number of stored intermediate solutions grows as the search tree of patterns 
 is explored, then decreases as solutions are validated, branches are cut, or the maximal level $K$
 is reached.
 

 The storage of the intermediate solutions $Post_{\pi_i}(X)$ allows to reuse 
 the computations already performed. Even if the search tree of patterns is visited exhaustively,
 it already allows to obtain much better computation times than with Function $Find\_Pattern$.

  A second list, denoted by $Solution$ in Figure \ref{fig:findpattern2},
 is used to store the validated patterns associated to $X$,
 i.e. a list of patterns of the form $\{ \pi_j \}_{j \in I_X'}$,
 where for each $j \in I_X'$ we have $Post_{\pi_j}(X) \subseteq R$, 
 $Tube_{\pi_j}(X) \bigcap B = \emptyset$ and $Tube_{\pi_j}(X) \subseteq S$. Here, $I_X'$ is 
 the set of indexes associated the the stored validated solutions, $\vert I_X' \vert$ is thus 
 the number of stored validated solutions for the initial box $X$. 
 The number of stored validated solutions can only increase, and we hope that 
 at least one solution is found, otherwise, the initial box $X$ is split in two sub-boxes.

 Note that several solutions can be returned by $Find\_Pattern2$, further optimizations 
could thus be performed, such as returning the pattern minimizing a given cost function.
In practice, and in the examples given below, we return the first validated pattern 
and stop the computation as soon as it is obtained 
(see commented line in Figure \ref{fig:findpattern2}).
 

%
 %

Compared to \cite{fribourg2014finite,ulrich}, this new function highly improves the computation
times, even though the complexity of the two functions is theoretically the same, at most in $O(N^K)$.
A comparison between functions $Find\_Pattern$ and $Find\_Pattern2$ is given in 
Section \ref{sec:comparison}.

 \begin{figure*}[t]
\fbox{
\begin{minipage}{0.9\textwidth}
 \begin{algorithmic}
 \STATE{\textbf{Function:} $Find\_Pattern2(W,R,S,B,K)$}
 \STATE{\begin{center}\line(1,0){150}\end{center}}
 \STATE{\quad \textbf{Input:}A box $W$, a box $R$, a box $S$, a box $B$, a length $K$ of input pattern}
 \STATE{\quad \textbf{Output:}$\langle \pi,True\rangle$ or $\langle\_, False\rangle$}
 \STATE{\begin{center}\line(1,0){150}\end{center}}

 \STATE{$Solution = \{  \emptyset \}$}
 \STATE{$\mathcal{L} = \{ \left(W,W, \emptyset \right) \}$}
 \WHILE{$\mathcal{L} \neq \emptyset$} \STATE{$e_{current}$ = takeHead($\mathcal{L}$)}
 
 \FOR{$i \in U$} 
    \IF{$Post_{i}(e_{current}.Y_{current}) \subseteq R$ \AND $Tube_{i}(e_{current}.{Y_{current}}) \bigcap B = \emptyset$ \AND $Tube_{i}(e_{current}.Y_{current}) \subseteq S$} \STATE{$\text{putTail}(Solution,e_{current}.\Pi + i)$ \quad\quad {\color{blue} /*can be replaced by: ``{\bf return} $\langle e_{current}.\Pi + i,True \rangle$'' */ } }
  \ELSE{ \IF{$Tube_{i}(e_{current}.{Y_{current}}) \bigcap B \neq \emptyset$ \OR
      $Tube_{i}(e_{current}.{Y_{current}}) \nsubseteq S$} \STATE{ discard $e_{current}$ } \ENDIF}
    \ELSE{ \IF{$Tube_{i}(e_{current}.{Y_{current}}) \bigcap B = \emptyset$ \AND $Tube_{i}(e_{current}.Y_{current}) \subseteq S$} 
      \STATE{\IF{$\text{Length}(\Pi)+1 < K$} \STATE{$\text{putTail}(\mathcal{L},\left(e_{\text{current}}.Y_{\text{init}}, Post_i(e_{current}.Y_{current}),e_{\text{current}}.\Pi + i \right))$ }   \ENDIF} \ENDIF}

    \ENDIF
  \ENDFOR
 \ENDWHILE

 \RETURN{$\langle \_,False \rangle$ if no solution is found, or $\langle \pi,True\rangle$, $\pi$ being
 any pattern validated in $Solution$.}

 \end{algorithmic}
\end{minipage}
}

\caption{Algorithmic form of Function $Find\_Pattern2$.}
\label{fig:findpattern2}
\end{figure*}

\section{Experimentations}
\label{sec:experimentations}

In this section, we apply our approach to different case studies taken
from the literature.    Our solver prototype is written in C++ and based on DynIBEX
 \cite{dynibex}.
 The computations times given in the following
have been performed on a 2.80 GHz Intel Core i7-4810MQ CPU with 8 GB
of memory. Note that our algorithm is mono-threaded so all the
experimentation only uses one core to perform the computations.
The results given in this section have been obtained with Function $Find\_Pattern2$.

\subsection{A linear example: boost DC-DC converter}

This linear example is taken from \cite{beccuti2005optimal} and has
already been treated with the state-space bisection method in a linear
framework in \cite{fribourg2014finite}.

The system is a boost DC-DC converter with one switching cell.  There
are two switching modes depending on the position of the switching
cell. The dynamics is given by the equation $\dot x (t) =
A_{\sigma(t)} x(t) + B_{\sigma(t)}$ with $\sigma(t) \in U = \{ 1,2
\}$. The two modes are given by the matrices:

$$ A_1 = \left( \begin{matrix}
          - \frac{r_l}{x_l} & 0 \\ 0 & - \frac{1}{x_c} \frac{1}{r_0 + r_c}
         \end{matrix} \right)  \quad B_1 = \left( \begin{matrix}
         \frac{v_s}{x_l} \\ 0 \end{matrix} \right) $$

$$ A_2 = \left( \begin{matrix} - \frac{1}{x_l} (r_l +
  \frac{r_0.r_c}{r_0 + r_c}) & - \frac{1}{x_l} \frac{r_0}{r_0 + r_c}
  \\ \frac{1}{x_c}\frac{r_0}{r_0 + r_c} & - \frac{1}{x_c}
  \frac{r_0}{r_0 + r_c}
         \end{matrix} \right)  \quad B_2 = \left( \begin{matrix}
         \frac{v_s}{x_l} \\ 0 \end{matrix} \right)  $$

with $x_c = 70$, $x_l = 3$, $r_c = 0.005$, $r_l = 0.05$, $r_0 = 1$,
$v_s = 1$.  The sampling period is $\tau = 0.5$.  The parameters are
exact and there is no perturbation.  We want the state to return
infinitely often to the region $R$, set here to $\lbrack 1.55 , 2.15
\rbrack \times \lbrack 1.0 , 1.4 \rbrack$, while never going out of
the safety set $S = \lbrack 1.54 , 2.16 \rbrack \times \lbrack 0.99 ,
1.41 \rbrack$.

The decomposition was obtained in less than one second with a maximum
length of pattern set to $K = 6$ and a maximum bisection depth of $D =
3$.  A simulation is given in Figure~\ref{fig:NL_0}.

\begin{figure}[!ht]
 \centering
 \includegraphics[scale=0.4]{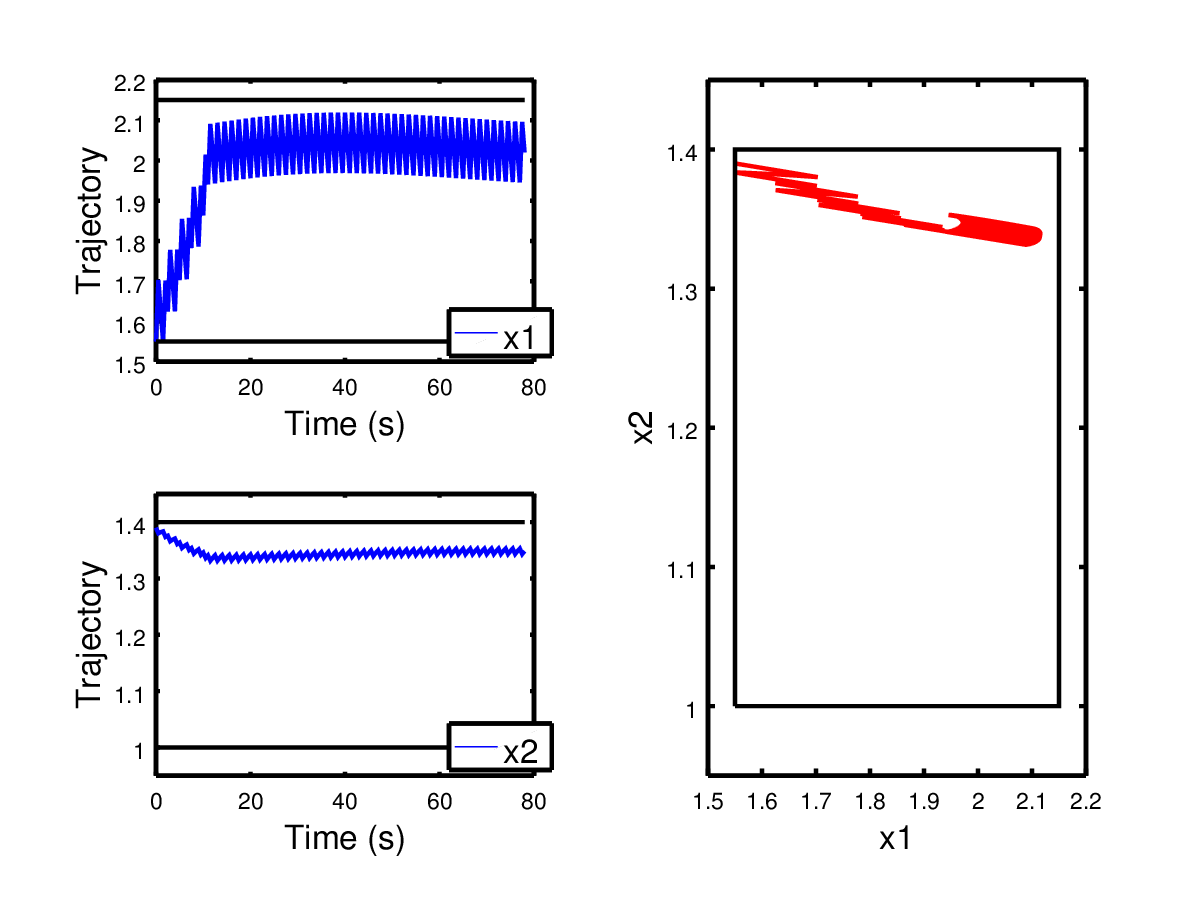}
 \caption{Simulation from the initial condition $(1.55,1.4)$. The box
   $R$ is in plain black. The trajectory is plotted within time for
   the two state variables on the left, and in the state-space plane
   on the right.}
  \label{fig:NL_0}
\end{figure}

\subsection{A polynomial example}
\label{ex2}

We consider the polynomial system taken from \cite{liu2013synthesis}:

\begin{equation}
 \left \lbrack \begin{matrix}
  \dot x_1 \\ \dot x_2
 \end{matrix} \right \rbrack  =
 \left \lbrack \begin{matrix} -x_2 - 1.5 x_1 - 0.5 x_1^3 + u_1 + d_1 \\ x_1 + u_2 + d_2
   \end{matrix} \right \rbrack.
\end{equation}

The control inputs are given by $u = (u_1,u_2) =
K_{\sigma(t)}(x_1,x_2)$, $\sigma(t) \in U = \{ 1,2,3,4 \}$, which correspond to
four different state feedback controllers $K_1(x) = (0,-x_2^2 + 2)$,
$K_2(x) = (0,-x_2)$, $K_3(x) = (2,10)$, $K_4(x) = (-1.5,10)$.  We thus
have four switching modes. The disturbance $d = (d_1,d_2)$ lies in
$\lbrack -0.005,0.005 \rbrack \times \lbrack -0.005,0.005 \rbrack$.
The objective is to visit infinitely often two zones $R_1$ and $R_2$,
without going out of a safety zone $S$, and while never crossing a
forbidden zone $B$.  Two decompositions are performed:
\begin{itemize}
 \item a decomposition of $R_1$ which returns $\{ (V_i,\pi_i) \}_{i
   \in I_1}$ with:

 $\bigcup_{i \in I_1} V_i = R_1$,

 $\forall i \in I_1, \ Post_{\pi_i}(V_i) \subseteq R_2$,

 $\forall i \in I_1, \ Tube_{\pi_i}(V_i) \subseteq S$,

 $\forall i \in I_1, \ Tube_{\pi_i}(V_i) \bigcap B = \emptyset$.

 \item a decomposition of $R_2$ which returns $\{ (V_i,\pi_i) \}_{i
   \in I_2}$ with:

 $\bigcup_{i \in I_2} V_i = R_2$,

 $\forall i \in I_2, \ Post_{\pi_i}(V_i) \subseteq R_1$,

 $\forall i \in I_2, \ Tube_{\pi_i}(V_i) \subseteq S$,

 $\forall i \in I_2, \ Tube_{\pi_i}(V_i) \bigcap B = \emptyset$.

\end{itemize}

The input boxes are the following:

$R_1 = \lbrack -0.5 , 0.5 \rbrack \times \lbrack -0.75 , 0.0 \rbrack$,

$R_2 = \lbrack  -1.0 , 0.65 \rbrack \times \lbrack 0.75 , 1.75 \rbrack$,

$S = \lbrack -2.0 , 2.0 \rbrack \times \lbrack -1.5 , 3.0 \rbrack$,

$B = \lbrack 0.1 , 1.0 \rbrack \times \lbrack 0.15 , 0.5 \rbrack$.

The sampling period is set to $\tau = 0.15$. The decompositions were
obtained in $2$ minutes and $30$ seconds with a maximum length of
pattern set to $K = 12$ and a maximum bisection depth of $D = 5$.  A
simulation is given in Figure~\ref{fig:NL_1} in which the disturbance
$d$ is chosen randomly in $\lbrack -0.005,0.005 \rbrack \times \lbrack
-0.005,0.005 \rbrack$ at every time step.

\begin{figure}[!ht]
 \centering
 \includegraphics[scale=0.4]{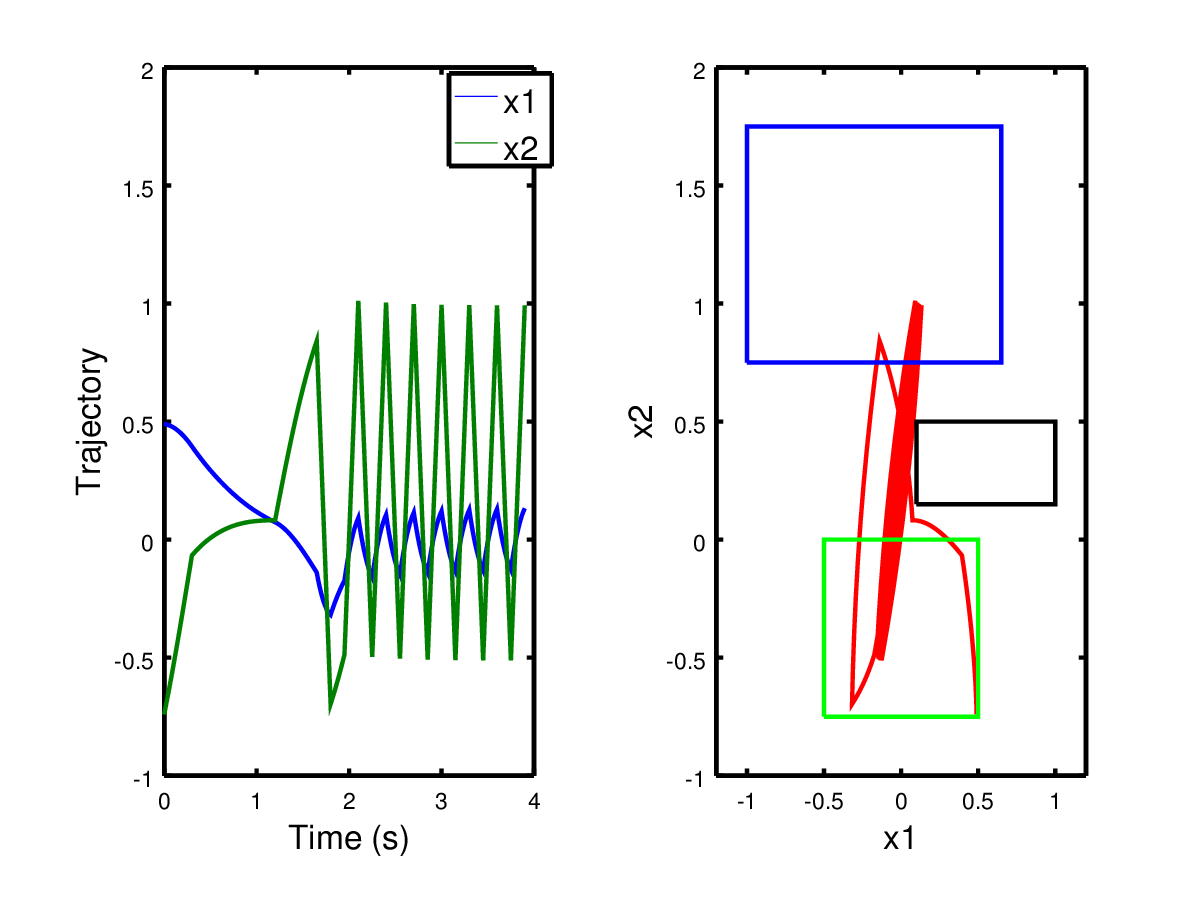}
 \caption{Simulation from the initial condition $(0.5,-0.75)$. The
   trajectory is plotted within time on the left, and in the state
   space plane on the right.  In the sate space plane, the set $R_1$
   is in plain green, $R_2$ in plain blue, and $B$ in plain black.}
 \label{fig:NL_1}
\end{figure}

\subsection{Building ventilation}

We consider a building ventilation application adapted from
\cite{meyer:tel-01232640}.  The system is a four room apartment
subject to heat transfer between the rooms, with the external
environment, with the underfloor, and with human beings.  The dynamics
of the system is given by the following equation:
\begin{multline}
 \frac{d T_i}{dt} = \sum_{j \in \mathcal{N}^\text{*}} a_{ij} (T_j -
 T_i) + \delta_{s_i} b_i (T_{s_i}^4 - T_i ^4 ) \\ + c_i
 \max\left(0,\frac{V_i - V_i^\text{*}}{\bar{ V_i} -
   V_i^{\text{*}}}\right)(T_u - T_i).
\end{multline}

The state of the system is given by the temperatures in the rooms
$T_i$, for $i \in \mathcal{N} = \{ 1 , \dots , 4 \}$.  Room $i$ is
subject to heat exchange with different entities stated by the indexes
$\mathcal{N}^\text{*} = \{1,2,3,4,u,o,c \}$.

The heat
transfer between the rooms is given by the coefficients $a_{ij}$ for
$i,j \in  \mathcal{N}^2$, and the different perturbations are the following:
\begin{itemize}
 \item The external environment: it has an effect on room $i$ with the
   coefficient $a_{io}$ and the outside temperature $T_o$, varying
   between $27^\circ C$ and $30^\circ C$.
  \item The heat transfer through the ceiling: it has an effect on
    room $i$ with the coefficient $a_{ic}$ and the ceiling temperature
    $T_c$, varying between $27^\circ C$ and $30^\circ C$.
  \item The heat transfer with the underfloor: it is given by the
    coefficient $a_{iu}$ and the underfloor temperature $T_u$, set to
    $17^\circ C$ ($T_u$ is constant, regulated by a PID controller).
  \item The perturbation induced by the presence of humans: it is
    given in room $i$ by the term $\delta_{s_i} b_i (T_{s_i}^4 - T_i
    ^4 )$, the parameter $\delta_{s_i}$ is equal to $1$ when someone
    is present in room $i$, $0$ otherwise, and $T_{s_i}$ is a given
    identified parameter.
\end{itemize}

The control $V_i$, $i \in \mathcal{N}$, is applied through the term
$c_i \max(0,\frac{V_i - V_i^\text{*}}{\bar{ V_i} -
  V_i^{\text{*}}})(T_u - T_i)$.  A voltage $V_i$ is applied to force
ventilation from the underfloor to room $i$, and the command of an
underfloor fan is subject to a dry friction.  Because we work in a
switched control framework, $V_i$ can take only discrete values, which
removes the problem of dealing with a ``max'' function in interval
analysis. In the experiment, $V_1$ and $V_4$ can take the values $0$V
or $3.5$V, and $V_2$ and $V_3$ can take the values $0$V or $3$V. This
leads to a system of the form~\eqref{eq:sys} with $\sigma(t) \in U =\{
1, \dots, 16 \}$, the $16$ switching modes corresponding to the
different possible combinations of voltages $V_i$.  The sampling
period is $\tau = 10$s.

The parameters $T_{s_i}$, $V_i^\text{*}$, $\bar V_i$, $a_{ij}$, $b_i$,
$c_i$ are given in \cite{meyer:tel-01232640} and have been identified
with a proper identification procedure detailed in
\cite{meyer2014ecc}.  Note that here we have neglected the term $\sum_{j
  \in \mathcal{N}} \delta_{d_{ij}}c_{i,j} \ast h(T_j - T_i)$ of
\cite{meyer:tel-01232640}, representing the perturbation induced by
the open or closed state of the doors between the rooms. Taking a
``max'' function into account with interval analysis is actually still
a difficult task. However, this term could have been taken into
account with a proper regularization (smoothing).

The decomposition was obtained in $4$ minutes with a maximum length of
pattern set to $K = 2$ and a maximum bisection depth of $D = 4$.  The
perturbation due to human beings has been taken into account by
setting the parameters $\delta_{s_i}$ equal to the whole interval
$\lbrack 0,1 \rbrack$ for the decomposition, and the imposed
perturbation for the simulation is given
Figure~\ref{fig:NL_2_perturbation}.  The temperatures $T_o$ and $T_c$
have been set to the interval $\lbrack27,30\rbrack$ for the
decomposition, and are set to $30^\circ C$ for the simulation.  A
simulation of the controller obtained with the state-space bisection
procedure is given in Figure~\ref{fig:NL_2}, where the control
objective is to stabilize the temperature in $\lbrack 20 , 22 \rbrack
^4$ while never going out of $\lbrack 19 , 23 \rbrack ^4$.

\begin{figure}[ht]
 \centering
 \includegraphics[scale=0.4]{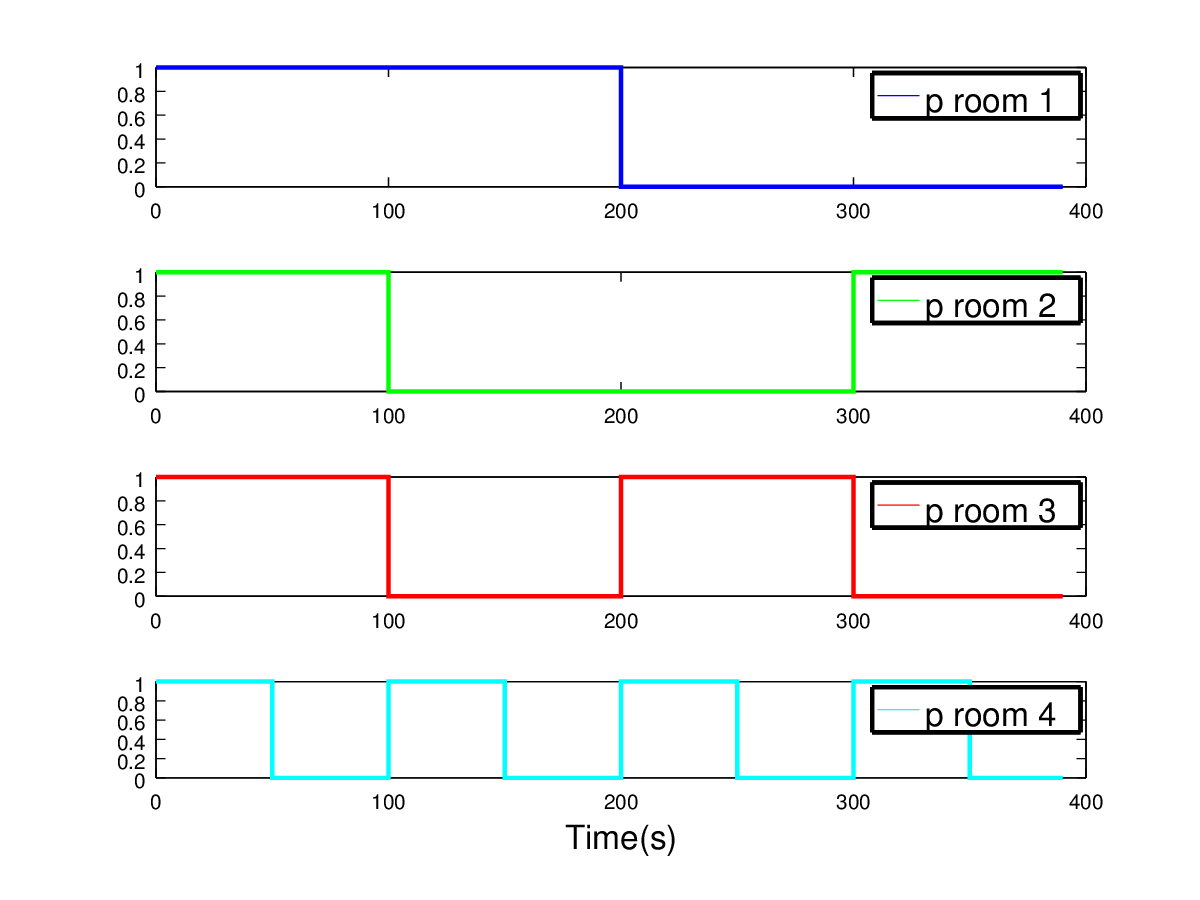}
 \caption{Perturbation (presence of humans) imposed within time in the
   different rooms.}
  \label{fig:NL_2_perturbation}
\end{figure}

\begin{figure}[ht]
 \centering
 \includegraphics[scale=0.35]{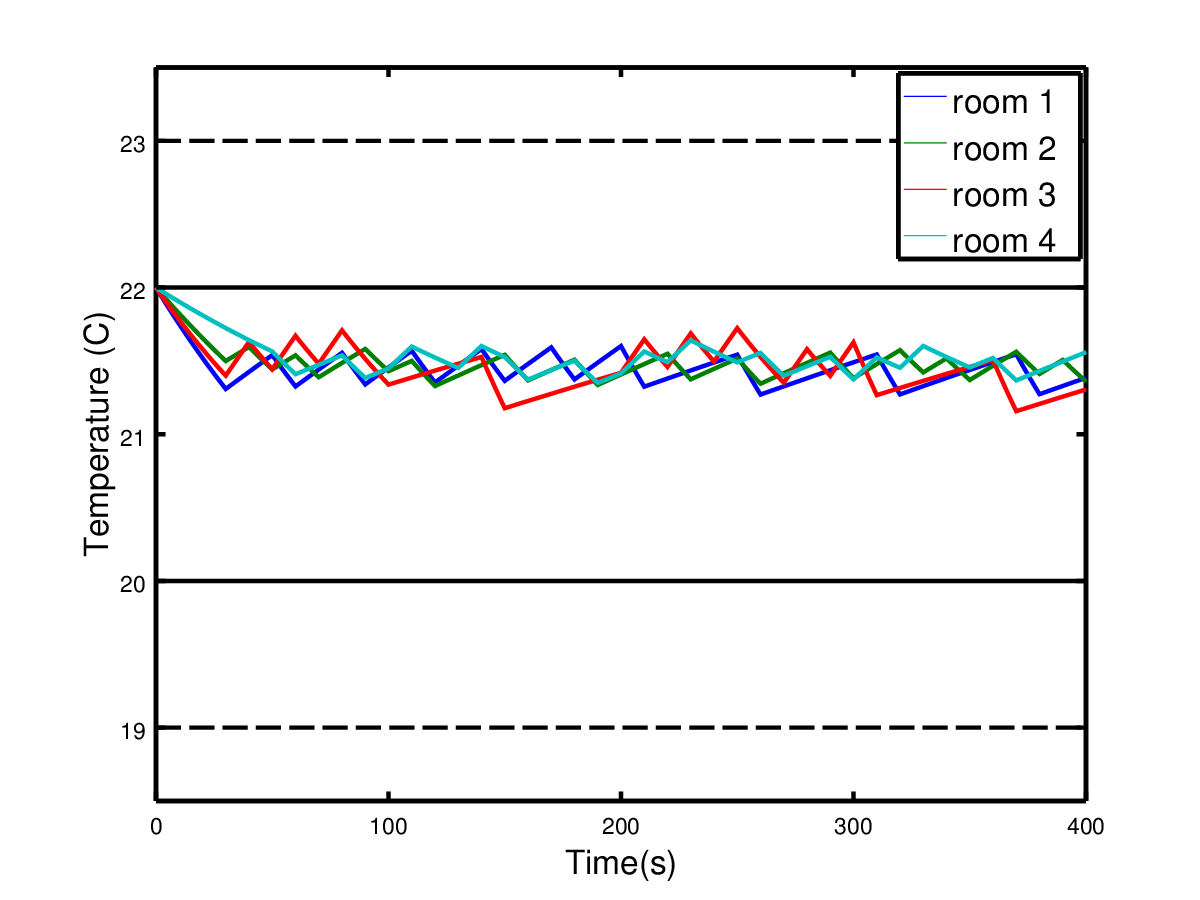}
 \caption{Simulation from the initial condition $(22,22,22,22)$. The
   objective set $R$ is in plain black and the safety set $S$ is in
   dotted black.}
  \label{fig:NL_2}
\end{figure}

\section{Performance tests}
\label{sec:comparison}

We present a comparison of the computation times obtained 
with functions $Find\_Pattern$, $Find\_Pattern2$, and with the state-of-the-art tools
PESSOA \cite{Mazo2010} and SCOTS \cite{SCOTS}.

\begin{table}
\caption{Comparison of $Find\_Pattern$ and $Find\_Pattern2$.}   
\label{tab:FP1-FP2}
 \begin{tabular}{|c|c|c|}
 \hline
  Example & \multicolumn{2}{c|}{Computation time} \\
  \cline{2-3} & $Find\_Pattern$ & $Find\_Pattern2$   \\
  \hline 
  DC-DC Converter &    $1609$ s  &  $< 1$ s \\
  Polynomial example & Time Out   & $150$ s  \\
  Building ventilation & $272$ s & $228$ s  \\
  \hline
    
   \end{tabular}
\end{table}

Table \ref{tab:FP1-FP2} shows a comparison of functions $Find\_Pattern$ and $Find\_Pattern2$,
which shows that the new version highly improves the computation times.
We can note that the new version is all the more efficient as the length of the patterns 
increases, and as obstacles cut the research tree of patterns. This is why we 
observe significant improvements on the examples of the DC-DC converter and the polynomial example,
and not on the building ventilation example, which only requires patterns of length $2$, and presents
no obstacle.

\begin{table}
\caption{Comparison with state-of-the-art tools.}   
\label{tab:SOTA}
 \begin{tabular}{|c|c|c|c|}
 \hline
  Example & \multicolumn{3}{c|}{Computation time} \\
  \cline{2-4} & FP2 & SCOTS & PESSOA   \\ 
  \hline 
  DC-DC Converter & $< 1$ s &    $43$ s  &  $760$ s \\
  Polynomial example & $150$ s & $131$ s & $\_\_$  \\
  Unicyle \cite{zamani2012symbolic,reissig2015feedback} & $3619$ s & $492$ s & $516$ s \\
  \hline
   \end{tabular}
\end{table}

Table \ref{tab:SOTA} shows of comparison of function $Find\_Pattern2$
with state-of-the-art tools SCOTS and PESSOA.
On the example of the DC-DC converter, our algorithm manages to control
the whole state-space $R=\lbrack 1.55 , 2.15
\rbrack \times \lbrack 1.0 , 1.4 \rbrack$ in less than one second, while SCOTS and PESSOA only 
control a part of $R$, and with greater computation times. Note that these computation
times vary with the number of discretization points used in both, but even with a very
fine discretization, we never managed to control the whole box $R$.
For the polynomial example, we manage to control the whole boxes $R_1$ and $R_2$, such as SCOTS 
and in a comparable amount of time. However, PESSOA does not support natively 
this kind of nonlinear systems.
We compared our method on a last case study on which PESSOA and SCOTS perform well
(see \cite{zamani2012symbolic,reissig2015feedback} for details of this case study,
and see Appendix for a simulation obtained using our method).
For this case study, we have not obtained as good computations times as they have.
This comes from the fact that this example requires a high number of switched modes,
long patterns, as well as a high number of boxes to tile the state-space.
Note that for this case study we used an automated pre-tiling of the state-space permitting 
to decompose the reachability problem in a sequence of reachability problems. 
This is in fact the most difficult case of application of our method. 
This reveals that our method is more adapted when either the number of switched modes
of the length of patterns is not high (though it can be handled at the cost of high
computation times). 
Another advantage is that we do not require 
a homogeneous discretization of the state space. We can thus tile large parts 
of the state-space using only few boxes, and this often permits to consider much
less symbolic states than with discretization methods, especially in high dimensions (see \cite{LeCoent2016}).


\section{Conclusion}
\label{sec:conclu}

We presented a method of control synthesis for nonlinear switched
systems, based on a simple state-space bisection algorithm, and on
validated simulation. The approach permits to deal with stability, reachability,
safety and forbidden region constraints. Varying parameters and
perturbations can be easily taken into account with interval
analysis. The approach has been numerically validated on several
examples taken from the literature, a linear one with constant
parameters, and two nonlinear ones with varying perturbations.
Our approach compares well with the state-of-the art tools SCOTS and PESSOA.

We would like to point out that the exponential complexity of the algorithms presented here, 
which is inherent to guaranteed methods,
is not prohibitive. Two approaches have indeed been developed to overcome this exponential
complexity. 
A first approach is the use of compositionality, which permits to split the system in
two (or more) sub-systems, and to perform control synthesis on these sub-systems of lower dimensions.
This approach has been successfully applied in \cite{LeCoent2016} to a system of dimension $11$,
and we are currently working on 
applying this approach to the more general context of contract-based design \cite{sangiovanni2012taming}.
A second approach is the use of Model Order Reduction, which allows to 
approximate the full-order system \eqref{eq:sys} with a reduced-order system, of
lower dimension, on which it is possible to perform control synthesis.
The bounding of the trajectory errors between the full-order and the reduced-order systems can 
be taken into account, so that the induced controller is guaranteed. This approach, described in
\cite{le2016control}, has been successfully applied on (space-discretized) 
partial differential equations, leading to systems of ODEs of dimension up to $100 000$.
The present work is a potential ground for the application of such methods to control of
nonlinear partial differential equations, with the use of proper nonlinear model order reduction
techniques.

\begin{ack}                               
This work is supported by Institut Farman (project
{\scshape SWITCH\-DESIGN}), by the French National Research Agency
through the ``iCODE Institute project'' funded by the IDEX
Paris-Saclay, ANR-11-IDEX-0003-02, and by Labex
DigiCosme (project ANR-11-LABEX-0045-DIGICOSME).  
\end{ack}

\bibliographystyle{plain}        
\bibliography{minimator}           

 \vspace{8em}
\section{Appendix}
\begin{figure}[h!]
 \includegraphics[width=0.95\textwidth]{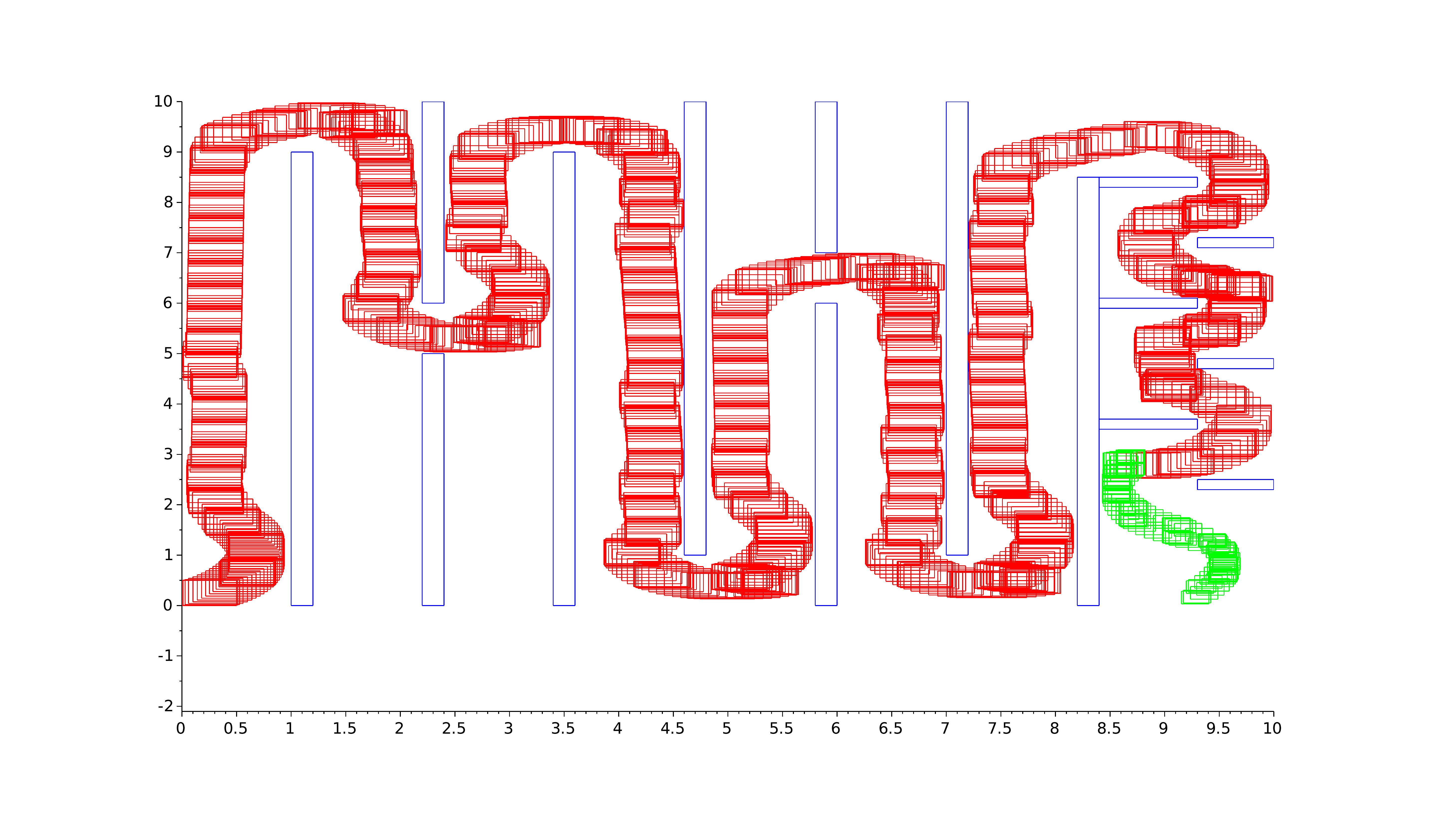}
 \caption{Simulation of the unicycle example.}
 \label{fig:unicycle}
\end{figure}


\end{document}